\newcommand{\ket}[1]{{\left\vert{#1}\right\rangle}}
\newcommand{\qw}[1][-1]{\ar @{-} [0,#1]}
\newcommand{\qwx}[1][-1]{\ar @{-} [#1,0]}
\newcommand{\cw}[1][-1]{\ar @{=} [0,#1]}
\newcommand{\gate}[1]{*+<.6em>{#1} \POS ="i","i"+UR;"i"+UL **\dir{-};"i"+DL **\dir{-};"i"+DR **\dir{-};"i"+UR **\dir{-},"i" \qw}
\newcommand{\meter}{*=<1.8em,1.4em>{\xy ="j","j"-<.778em,.322em>;{"j"+<.778em,-.322em> \ellipse ur,_{}},"j"-<0em,.4em>;p+<.5em,.9em> **\dir{-},"j"+<2.2em,2.2em>*{},"j"-<2.2em,2.2em>*{} \endxy} \POS ="i","i"+UR;"i"+UL **\dir{-};"i"+DL **\dir{-};"i"+DR **\dir{-};"i"+UR **\dir{-},"i" \qw}
\newcommand{\multimeasure}[2]{*+<1em,.9em>{\hphantom{#2}} \qw \POS[0,0].[#1,0];p !C *{#2},p \drop\frm<.9em>{-}}
\newcommand{\multigate}[2]{*+<1em,.9em>{\hphantom{#2}} \POS [0,0]="i",[0,0].[#1,0]="e",!C *{#2},"e"+UR;"e"+UL **\dir{-};"e"+DL **\dir{-};"e"+DR **\dir{-};"e"+UR **\dir{-},"i" \qw}
\newcommand{\ghost}[1]{*+<1em,.9em>{\hphantom{#1}} \qw}
\newcommand{\rstick}[1]{*!L!<-.5em,0em>=<0em>{#1}}
\newcommand{\lstick}[1]{*!R!<.5em,0em>=<0em>{#1}}
\newcommand{\Qcircuit}{\xymatrix @*=<0em>}
\newcommand{\nn}{\nonumber \\}
\newcommand{\kets}[1]{ |{#1} \rangle}
\newcommand{\herr}{\gamma}
\newcommand{\segs}{{ T}}
\newcommand{\mE}{{\mathbb E}}
\newtheorem{theorem}{Theorem}
\newtheorem{lemma}{Lemma}
\begin{document}
\title{Exponential improvement in precision for Hamiltonian-evolution simulation}
\author{D. W. Berry${}^1$, R. Cleve${}^2$, and R. D. Somma${}^3$}
\affiliation{${}^1$Department of Physics and Astronomy, Macquarie University, Sydney, NSW 2109, Australia\\
${}^2$Institute for Quantum Computing and School of Computer Science, University of Waterloo, ON N2L 3G1, Canada \\
${}^3$Theory Division, Los Alamos National Laboratory, Los Alamos, NM 87545, USA}
\date{\today}
\pacs{03.67.Ac, 89.70.Eg}

\begin{abstract}
We provide a quantum method for simulating Hamiltonian evolution with complexity
polynomial in the logarithm of the inverse error. This is an exponential improvement over existing methods
for Hamiltonian simulation. In addition, its scaling with respect to time is close to linear, and its scaling
with respect to the time derivative of the Hamiltonian is logarithmic. These scalings improve upon most
existing methods. Our method is to use a compressed Lie-Trotter formula, based on recent ideas for efficient
discrete-time simulations of continuous-time quantum query algorithms.
\end{abstract}

\maketitle

The simulation of 
physical quantum systems is important to understand novel physical phenomena and is a major potential application of quantum computers.
This is the original reason why Feynman proposed the concept of a quantum computer in 1982 \cite{Fey82}.
Typically the simulation required is the time evolution of the quantum state under a Hamiltonian.
In 1996, Lloyd \cite{Lloyd96} showed how to explicitly simulate evolution under physical Hamiltonians
that are expressible as a sum of local interaction Hamiltonians,
with scaling polynomial with respect to the number of qubits and the evolution time.

Aharonov and Ta-Shma \cite{Aharonov03} formulated a more abstract form of Hamiltonian simulation, dubbed the \textit{sparse Hamiltonian} problem.
Here, a sparse Hamiltonian (i.e.\ with a bounded number of nonzero entries in each row/column) is specified by an oracle for the values and positions of its nonzero entries.
Sparse Hamiltonians include the physical case of sums of local interaction Hamiltonians.
In addition to serving as a generic formulation for the problem of simulating Hamiltonian evolution, sparse Hamiltonian simulation
provides a method to design new quantum algorithms~\cite{Childs09b,Harrow09}. 
Beginning with~\cite{Aharonov03}, a series of simulations were discovered~\cite{Childs04,Berry2007,Childs08,Wiebe10,Wiebe11,Childs11,BerryC12} with improvements in efficiency in various parameters.

A highly desirable feature of algorithms is that doubling the number of significant digits 
only increases the complexity by a constant factor.
This will be true if the complexity is polynomial in $\log(1/\varepsilon)$, where $\varepsilon$ is the allowable error.
A drawback to previous quantum algorithms for Hamiltonian simulation is that their complexity is significantly worse than this \cite{Brown06,Brown10}.
Another drawback is that, for time-dependent Hamiltonians, the complexity depends strongly on the time-derivatives of the Hamiltonian \cite{Wiebe10,Wiebe11}.
One technique avoids that dependence, at the expense of worse scaling in both the error and the evolution time \cite{Poulin11}.

In this work we present a new approach to achieve Hamiltonian simulations with complexity scaling polynomial
in $\log(1/\varepsilon)$; this is an exponential improvement over prior approaches.
Our algorithm also provides scaling in the norm of the time derivative of the Hamiltonian, $\|H'\|$, evolution time and sparseness that is improved over
that from prior work in~\cite{Childs04,Berry2007,Childs11,Wiebe10,Wiebe11}.
The basis of our approach is to decompose sparse Hamiltonians into sums of self-inverse Hamiltonians,
then extend the methodologies in~\cite{CleveG+2009,Berry12} to apply to these sums.
We present the result in terms of sparse Hamiltonians for generality,
but it equally holds for sums of interaction Hamiltonians.

Our result for the complexity in terms of the calls to the oracle plus additional gates is as follows.
\begin{theorem}
\label{thm:main}
A Hamiltonian $H$ acting on $n$ qubits with sparseness $d$ (at most $d$ nonzero entries in each row/column) and an oracle for the values and
positions of the nonzero elements can be simulated with accuracy $\varepsilon$ for time $t$ with 
\begin{equation}
\label{eq:addgat}
O\left([d^2 \tau+\log(1/\varepsilon)] \log^3[d(\tau+\tau')/\varepsilon]n^c\right)
\end{equation}
calls to the oracle and additional (one- and two-qubit) gates, where $\tau := \|H\| t$, $\tau' := \|H'\| t$,
and $c$ is a constant.
\end{theorem}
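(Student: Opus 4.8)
The plan is to reduce $e^{-iHt}$ to repeated applications of a few self-inverse unitaries and then to \emph{compress} the resulting product formula so that the precision enters only logarithmically.

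First I would decompose the Hamiltonian. Using the oracle together with a coloring of the graph of nonzero entries, $H$ can be written as $H=\sum_{j=1}^m H_j$ with $m=O(d^2)$ one-sparse Hermitian terms, each evaluable with $O(1)$ oracle calls and $n^c$ additional gates. A one-sparse Hermitian operator is a direct sum of $1\times 1$ and $2\times 2$ blocks, so after rescaling each $H_j$ can be written as $\beta_j R_j$ with $R_j$ self-inverse, $R_j^2=I$ and $|\beta_j|\le\|H\|$. Self-inverseness is the crucial property: $e^{-iR_j\theta}=\cos\theta\,I-i\sin\theta\,R_j$, so every factor is, up to the scalar $\cos\theta$, either the identity or a single ``query'' $R_j$.

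Next I would build the compressed Lie--Trotter formula. Partition $[0,t]$ into $r$ segments and, on each segment, approximate the evolution by an ordered product $\prod_k e^{-iR_{j_k}\theta_k}$. Expanding every factor via the identity above turns this into a sum over ``query patterns''; a pattern with $\ell$ active factors has coefficient $O(\theta^\ell)$ and equals a product of $\ell$ self-inverse operators, hence a unitary. Since high-order patterns are exponentially suppressed, truncating to patterns with at most $K=O(\log(r/\varepsilon))$ active queries costs only $\varepsilon/r$ error per segment. I would then realize the truncated sum as a linear combination of unitaries via a prepare-and-select circuit, obtaining the target operator as a subnormalized block of a larger unitary, and choose the segment length so that the subnormalization is a fixed constant (about $2$).

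To convert this block into the true evolution I would apply amplitude amplification; with subnormalization near $2$ the success amplitude is about $1/2$, so one amplification round per segment suffices and no polynomial-in-$1/\varepsilon$ blow-up occurs. Summing over segments, the leading cost is the total query weight $t\sum_j\|H_j\|=O(d^2\tau)$, the truncation order and LCU bookkeeping contribute the additive $\log(1/\varepsilon)$ and the multiplicative $\log^3$ overhead, and $n^c$ tracks the gate cost of each oracle-based $R_j$. For a time-dependent $H$ I would freeze $H$ at a segment midpoint, bounding the extra error by $O(\|H'\|(t/r)^2)$; this is what places $\tau'=\|H'\|t$ inside the logarithm.

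The hard part will be the quantitative control of the compression. One must show simultaneously that (i) truncating the query-pattern expansion at order $K=O(\log(1/\varepsilon))$ is genuinely $\varepsilon$-accurate after the errors from all $r$ segments and from replacing the exact time-ordered evolution by the product formula are accumulated (this is where $\|H'\|$ must be tamed), and (ii) the subnormalized block is close enough to an isometry that amplitude amplification converges with constant amplitude. Balancing $r$, the truncation order $K$, and the amplification amplitude against the global error budget so as to extract exactly the stated $\log^3$ and additive $\log(1/\varepsilon)$ dependence is the delicate accounting at the heart of the argument.
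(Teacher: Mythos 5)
Your overall strategy (self-inverse decomposition, expansion of each short-time factor as $\cos\theta\, I - i\sin\theta\, R_j$, truncation of the query-pattern expansion at order $O(\log(1/\varepsilon))$) matches the spirit of the paper, but two of your steps contain genuine gaps. First, the claim that each $1$-sparse Hermitian term can ``after rescaling'' be written as a single $\beta_j R_j$ with $R_j^2=I$ is false: a $1$-sparse Hermitian matrix is a direct sum of $2\times 2$ blocks whose magnitudes generally \emph{differ} from block to block (e.g.\ one block equal to $\sigma_x$ and another equal to $\tfrac12\sigma_x$), so no single scalar rescaling makes it an involution. Repairing this is precisely the content of the paper's Lemma~\ref{lem:dec}: after rounding the block magnitudes to multiples of $g/\ell$, each $1$-sparse real or imaginary term is written as an \emph{equally weighted sum} of $O(\|H\|_{\max}/\gamma)$ commuting self-inverse Hamiltonians, and the rounding error $\gamma$ must be budgeted against $\varepsilon$ (the paper takes $\gamma=\Theta(\varepsilon/t\sqrt{d})$). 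Your proposal never introduces this extra sum or its error, yet it is what makes the entire query-pattern expansion available.

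Second, your conversion of the subnormalized block back into the evolution via ``one round of amplitude amplification per segment'' is not available as stated: the input state to each segment is unknown, and standard amplitude amplification requires a reflection about that input state. Making amplification work \emph{obliviously} on an unknown input, and robustly when the block is only approximately proportional to a unitary, is a substantial additional lemma that you have not supplied (you flag it as ``the hard part'' but give no argument). The paper takes a different route here entirely: each controlled-$U_l$ is accompanied by a measurement on its control qubit, segments are sized so that all measurements succeed with probability at least $3/4$, and failures are undone by a recursive correction procedure. The resulting random walk over segment attempts is then analyzed with a Chernoff bound to cap the total number of attempts at $O(d^2\tau+\log(1/\varepsilon))$ --- this is exactly where the additive $\log(1/\varepsilon)$ in Eq.~\eqref{eq:addgat} comes from, and avoiding the Markov bound here is what prevents a multiplicative $1/\varepsilon$. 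Your accounting produces the additive $\log(1/\varepsilon)$ from ``truncation order and LCU bookkeeping,'' which is not where it arises and would not survive a careful analysis of the success probability over all $O(d^2\tau)$ segments.
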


The factors that depend on $n$ are similar to previous work~\cite{CleveG+2009,Berry12}, and will not be discussed further.
Units are used such that $\hbar=1$.
We take $\|H\|$ and $\|H'\|$ to be the maximum of the spectral norm over the interval $[0,t]$.

The number of both oracle calls and additional gates scales linearly in
$d^2\|H\| t$, times a factor that is logarithmic in the system parameters.
Most significantly, the complexity scales polynomially in $\log(1/\varepsilon)$,
which is a dramatic improvement over previous methods.
In addition, the complexity scales logarithmically in the norm of the derivative of $H$,
whereas most previous techniques scaled polynomially in $\|H'\|$.
The scaling in $t$ is only a logarithmic factor away from being linear, which is
better than any previous scheme based on Lie-Trotter formulae.
One technique provides scaling strictly linear in the time, at the expense of worse scaling in the error and not allowing time-dependence \cite{Childs08,BerryC12}.

The oracle for the Hamiltonian is similar to that in~\cite{Berry2007}.
That is, given any column $j$ and a value for time $t'\in[0, t]$, there is a function giving the position and value of the $i$'th nonzero element in the $j$'th column of $H(t')$.
Intuitively, the oracle can be interpreted as a subroutine for computing the nonzero elements of the Hamiltonian.

Reference~\cite{CleveG+2009} considers the sum of two Hamiltonians: a ``query'' Hamiltonian, and a driving Hamiltonian.
The crucial feature is that the query Hamiltonian is self-inverse (up to a trivial change in scaling).
In the present context there is no driving Hamiltonian;
the Hamiltonian to be simulated (i.e., $H(t')$) need not be self-inverse, but it
can be decomposed into a sum of self-inverse Hamiltonians as described below.
Then, we can generalize the methods in~\cite{CleveG+2009,Berry12} to apply to sums of
many self-inverse Hamiltonians.
To ensure the scaling is polynomial in $\log(1/\varepsilon)$, we improve the error analysis of~\cite{CleveG+2009,Berry12}.

The design of the algorithm has the following basic steps.
Details on these steps will follow.
\begin{enumerate}
\item The Hamiltonian is decomposed into a sum of real or imaginary 1-sparse Hamiltonians $H_j$~\cite{Berry2007}.
\item The Hamiltonian evolution is approximated by a Lie-Trotter product formula of evolutions under the $H_j$.
\item The $H_j$ are decomposed into sums of commuting self-inverse Hamiltonians $U_l$.
\item The evolution under each $U_l$ for a short time interval is implemented with high probability
using a control qubit, a controlled application of $U_l$, and a measurement on the control qubit.
\item The evolution is broken up into a number of \textit{segments}, which are contiguous blocks of short time intervals.
For each segment, a concentration bound and a compression technique enable the number of applications of the controlled-$U_l$ operations,
or calls to the oracle for $H$, to be limited.
\item If the desired measurement result is not obtained on all control qubits, then there is an error which can be corrected with high probability via a recursive procedure.
\end{enumerate}

For sparse Hamiltonians, step 1 can be applied using previous techniques such as in~\cite{Berry2007}.
For the specific case of a Hamiltonian that is a sum of interaction Hamiltonians, we can use this sum as an initial decomposition.
Then each interaction Hamiltonian is sparse, and we can compute the positions and values of its entries.
This provides an oracle for using techniques such as in~\cite{Berry2007} to decompose into 1-sparse Hamiltonians.

We can easily separate these 1-sparse Hamiltonians into real  and imaginary parts.
The reason why we then consider the Lie-Trotter product formula (step 2) rather than immediately decomposing into self-inverse Hamiltonians (step 3),
is that the error bound for the Lie-Trotter product formula relies on the Hamiltonian varying continuously.
After using the Lie-Trotter product formula, we can decompose the 1-sparse Hamiltonians into self-inverse Hamiltonians which may vary discretely.
Because these are mutually commuting, the exponential of the sum is exactly decomposed as a product of exponentials with no additional error.
The decomposition into self-inverse Hamiltonians is as in the following Lemma.

\begin{lemma}
\label{lem:dec}
A 1-sparse real or imaginary Hamiltonian $G$ can be approximated within max-norm distance $\herr$ via an
equally weighted sum of $O(\|G\|_{\mbox{\rm\scriptsize max}}/\herr)$ commuting self-inverse Hamiltonians.
\end{lemma}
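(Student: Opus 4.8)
The plan is to exploit the block structure of a 1-sparse Hermitian matrix and reduce the problem to a scalar quantization within each block. Since $G$ has at most one nonzero entry per row and column, after a relabeling of basis states it is block-diagonal with blocks of size at most two: each off-diagonal coupling pairs an index $i$ with an index $j$ into a $2\times2$ block $\left(\begin{smallmatrix} 0 & a \\ \bar a & 0\end{smallmatrix}\right)$, while any surviving diagonal element gives a $1\times1$ block. Because $G$ is real or imaginary, within each $2\times2$ block the coupling is either $a\sigma_x$ (real case) or $a\sigma_y$ (imaginary case) for a real number $a$ with $|a|\le\|G\|_{\max}$, and a $1\times1$ block is just a real scalar of magnitude at most $\|G\|_{\max}$. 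This lets me treat every block independently and search for self-inverse building blocks of the form $\pm\sigma_x$, $\pm\sigma_y$, or $\pm1$, all of which are Hermitian and square to the identity.

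Next I would quantize the coupling strengths. Fix $m=O(\|G\|_{\max}/\herr)$ and, for each block $B$ with scaled coupling $a/\|G\|_{\max}\in[-1,1]$, choose signs $s^{(B)}_1,\dots,s^{(B)}_m\in\{-1,+1\}$ so that $\frac1m\sum_{l}s^{(B)}_l$ is the nearest point of the uniform grid $\{(2k-m)/m : k=0,\dots,m\}$ to $a/\|G\|_{\max}$. Because this grid has spacing $2/m$, the scalar error is at most $1/m$, so after multiplying by $\|G\|_{\max}$ the per-entry error is at most $\|G\|_{\max}/m$. I then define operators $U_l$ block by block: on a $2\times2$ block $U_l$ acts as $s^{(B)}_l\sigma_x$ (real case) or $s^{(B)}_l\sigma_y$ (imaginary case), and on a $1\times1$ block as the scalar $s^{(B)}_l$. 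Each $U_l$ is then a globally defined, Hermitian, self-inverse Hamiltonian, and the candidate approximation is the equally weighted sum $\widetilde G:=(\|G\|_{\max}/m)\sum_{l=1}^m U_l$.

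It remains to verify the three required properties. Self-inverseness holds because every block of $U_l$ squares to the identity ($\sigma_x^2=\sigma_y^2=1$ and $(\pm1)^2=1$) and the blocks are mutually orthogonal. Commutativity holds because all the $U_l$ are simultaneously block-diagonal in the same basis and, within any block, are scalar multiples of a single fixed matrix ($\sigma_x$, $\sigma_y$, or $1$), hence commute; across distinct blocks they commute trivially. Finally, on each block $\widetilde G$ equals $\|G\|_{\max}\bigl(\frac1m\sum_l s^{(B)}_l\bigr)$ times that fixed matrix, which differs from the corresponding block of $G$ by at most $\|G\|_{\max}/m$ in each entry; choosing the constant in $m=O(\|G\|_{\max}/\herr)$ so that $\|G\|_{\max}/m\le\herr$ gives max-norm distance at most $\herr$.

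I expect the main obstacle to be bookkeeping rather than any deep step: the same number $m$ of terms must serve every block at once, so the quantization has to use a single shared grid while the signs $s^{(B)}_l$ vary from block to block, and one must check that letting these signs be block-dependent spoils neither self-inverseness nor the mutual commutativity of the $U_l$. One should also confirm that the construction is compatible with the oracle model, so that each $U_l$ is computable locally from the oracle value of the single nonzero entry in a given row; this ensures the decomposition is not merely existential but can be used inside the simulation algorithm.
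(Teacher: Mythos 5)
Your proof is correct and takes essentially the same route as the paper's: exploit the $1\times 1$/$2\times 2$ block structure of a 1-sparse real or imaginary Hermitian matrix, quantize each block's real coupling to a uniform grid with $O(\|G\|_{\mbox{\rm\scriptsize max}}/\herr)$ levels, and realize each grid value as an equally weighted sum of self-inverse blocks ($\pm\sigma_x$, $\pm\sigma_y$, or $\pm 1$). The one substantive difference is in the quantization itself. The paper rounds each coupling to $(h/\ell)g$ and writes the result as $(g/\ell)\sum_j G_j$ where the $G_j$ have entries in $\{-1,0,+1\}$ --- a unary encoding in which most terms act as zero on a given block --- and must then perform a second splitting stage replacing each zero block by $\tfrac12(+I)+\tfrac12(-I)$, since a matrix with zero eigenvalues is not self-inverse. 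Your signed-sum encoding, in which every $U_l$ acts as $\pm 1$ times the block's fixed Pauli on \emph{every} block (with $k$ plus signs and $m-k$ minus signs chosen to hit the grid point $(2k-m)/m$), never produces a zero block, so that extra stage is unnecessary and the terms are self-inverse by construction; this is a mild but genuine simplification, at the cost only of a grid of spacing $2/m$ rather than $1/\ell$, an irrelevant constant factor. Your closing caveat about oracle-compatibility is the right one to flag and is easily discharged: from the oracle value $a$ of the unique nonzero entry in a row one locally computes $k=\mathrm{round}\bigl(m(a/\|G\|_{\mbox{\rm\scriptsize max}}+1)/2\bigr)$ and hence the sign $s^{(B)}_l$ for any $l$, exactly analogous to the paper's explicit formula for $G_j[a,b]$.
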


\begin{proof}
A real 1-sparse Hamiltonian can be expressed as a linear combination of two commuting 1-sparse Hamiltonians, where one is diagonal, and the other is off-diagonal.
First consider the off-diagonal component, and denote it as $\widetilde{G}$.
Since $\widetilde{G}$ is 1-sparse, it is a direct sum of $2 \times 2$ blocks, each of which is a real multiple of the Pauli sigma matrix $\sigma_x$.
These multiples are in the interval $[-g,+g]$, where 
$g = \|G\|_{\mbox{\rm\scriptsize max}}$.
Therefore, they can each be rounded to the nearest rational multiple of $g$,
which is of the form $(h/\ell)g$, where 
$h \in \{-\ell,\dots,+\ell\}$, and the resulting error is at most $g/(2\ell)$ (the value to set $\ell$ will be discussed later).
This rounded $\widetilde{G}$ can be expressed as 
$(g/\ell) \sum_{j=1}^{\ell} G_j$, 
where each $G_j$ is 1-sparse with entries in $\{-1,0,+1\}$.
A small illustrative example (with $\ell=2$) is
\begin{align}\label{eq:ex1}
\begin{pmatrix}
0 & 1 & 0 & 0 \\
1 & 0 & 0 & 0 \\
0 & 0 & 0 & \frac{1}{2} \\
0 & 0 & \frac{1}{2} & 0
\end{pmatrix} 
=
\frac 12 
\begin{pmatrix}
0 & 1 & 0 & 0 \\
1 & 0 & 0 & 0 \\
0 & 0 & 0 & 1 \\
0 & 0 & 1 & 0
\end{pmatrix} 
+ 
\frac 12 
\begin{pmatrix}
0 & 1 & 0 & 0 \\
1 & 0 & 0 & 0 \\
0 & 0 & 0 & 0 \\
0 & 0 & 0 & 0
\end{pmatrix}.
\end{align}
Formally, we can define the matrix entries of $G_j$ as 
\begin{align}
G_j[a,b] = 
\begin{cases}
+1 & \mbox{if $0 \le j < {\rm round}(\ell\widetilde{G}[a,b]/g)$} \\
-1 & \mbox{if ${\rm round}(\ell\widetilde{G}[a,b]/g) < j \le 0$} \\
\phantom{+}0 & \mbox{otherwise},
\end{cases}
\end{align}
from which an oracle for each $G_j$ can be implemented in terms of the oracle for $G$.

The Hamiltonians $G_j$ can have zero eigenvalues, but these can be eliminated by one further stage of splitting into two self-inverse Hamiltonians.
For example, 
\begin{align}
\begin{pmatrix}
0 & 1 & 0 & 0 \\
1 & 0 & 0 & 0 \\
0 & 0 & 0 & 0 \\
0 & 0 & 0 & 0
\end{pmatrix} 
= 
\frac 12 
\begin{pmatrix}
0 & 1 & 0 & 0 \\
1 & 0 & 0 & 0 \\
0 & 0 & 1 & 0 \\
0 & 0 & 0 & 1
\end{pmatrix} 
+ 
\frac 12 
\begin{pmatrix}
0 & 1 & \phantom{\!-}0 & \phantom{\!-}0 \\
1 & 0 & \phantom{\!-}0 & \phantom{\!-}0 \\
0 & 0 & \!-1 & \phantom{\!-}0 \\
0 & 0 & \phantom{\!-}0 & \!-1
\end{pmatrix}
\end{align}
for the second Hamiltonian in Eq.~(\ref{eq:ex1}).
We have only discussed the real off-diagonal component, but a similar procedure exists for the diagonal component (with $1 \times 1$ real-valued blocks),
as well as for the off-diagonal imaginary case (with $2 \times 2$ $\sigma_y$ blocks, instead of $\sigma_x$).
The reason why we consider real or imaginary Hamiltonians for this Lemma is that real and imaginary components in the same block would not commute.
Restricting the Hamiltonian to be real or imaginary ensures that all self-inverse components commute.
Setting $\ell = O(g / \herr)$ leads to the required precision (max-norm distance).
\end{proof}

As a result of Lemma~\ref{lem:dec}, 
the Hamiltonian can be approximated as an equally-weighted sum
of self-inverse Hamiltonians $U_l$; that is, $H\approx \herr \sum_l U_l$.
Then, the methods in~\cite{CleveG+2009,Berry12} can be extended to simulate the evolution under $H$.
Because the Lie-Trotter formula results in short-time evolutions under the $U_l$, we have
\begin{equation}
e^{-iU_l s} = \openone \cos s - i U_l \sin s,
\end{equation}
where the time interval is $\delta t=s/\gamma$.
In~\cite{CleveG+2009} we provided an algorithm to implement $e^{-iU_l s}$ that uses an ancilla qubit,
implements the operation $U_l$ controlled on the state $\ket 1$ of the ancilla,
and performs a measurement (step 4). If the result of the measurement is $b=0$, the simulation succeeds. 
The initial state of the ancilla is $\ket 0$ and is acted on by the one-qubit unitaries $R$ and $P$,
with
\begin{equation}\label{eq:rdef}
R := \frac 1{\nu}
\left(
\begin{array}{lr}
\sqrt{\cos s} & \sqrt{\sin s} \\
\sqrt{ \sin s} & -\sqrt{\cos s}
\end{array}
\right),
\ \ \ \ \
P :=
\left(
\begin{array}{lr}
1 & 0 \\
0 & -i
\end{array}
\right),
\end{equation}
where $\nu^2={|\cos s| + |\sin s|}$ is for normalization.
We can then simulate each term of the Lie-Trotter product formula using one of these ancillas
and a controlled-$U_l$ operation, which can be implemented with a call to the oracle for $H$. While 
 the number of oracle calls with this method is still too large (it is the number of terms in the 
Lie-Trotter formula), it can be reduced significantly using the results in~\cite{CleveG+2009,Berry12}, as discussed below.

Next, we divide the time interval $[0,t]$ into smaller intervals or segments.
The length of each segment is chosen such that there is a probability $\le 1/4$ of failure (the wrong measurement result on any of the control qubits for that segment).
The state of the ancillas for a segment is a combination of bit strings,
where the positions of the 1's implicitly control $U_l$.
We can implement the same operation in an alternative way, by performing $U_l$
explicitly controlled on the state of all qubits (Fig.~\ref{fig:unco}).
The first controlled operation selects the value of $l$ in $U_l$ based on the position of the first 1 in the string,
the second controlled operation uses the position of the second 1, and so forth. If there are no further 1's, the controlled operation simply performs the identity.
Since $s \ll 1$, the state of the control qubits in the segment will have most of its weight on bit strings with low Hamming weight.
We can therefore truncate the Hamming weight at $k_1$, and limit the resulting error to $\varepsilon_1$,
which is exponentially small. Using the simulation in Fig.~\ref{fig:unco},
the number of oracle calls in the segment is reduced to $k_1$.
The choice of $\varepsilon_1$ and $k_1$ will yield the oracle complexity.
To reduce the number of gates or qubits needed in each segment, we use
the compression technique of~\cite{Berry12} almost directly.

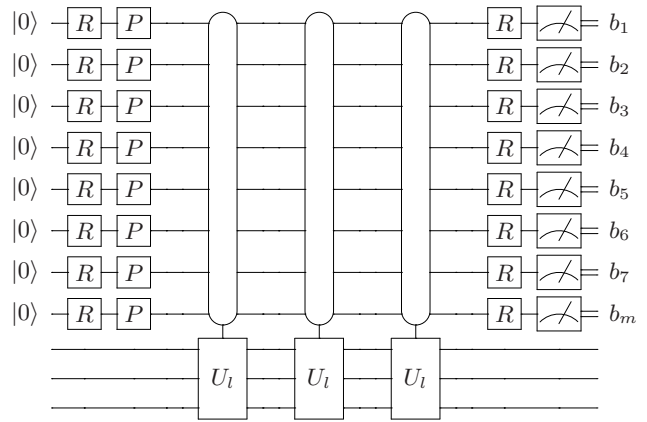
\begin{figure}
\centerline{\Qcircuit @C=0.65em @R=0.3em {
& \lstick{\kets{0}} & \gate{R} & \gate{P} & \qw & \qw & \multimeasure{7}{\,} \qwx[8] & \qw & \qw & \multimeasure{7}{\,} \qwx[8] & \qw & \qw & \multimeasure{7}{\,} \qwx[8] & \qw & \qw & \gate{R} & \meter & \rstick{b_1} \cw \\
& \lstick{\kets{0}} & \gate{R} & \gate{P} & \qw & \qw & \ghost{\,} & \qw & \qw & \ghost{\,} & \qw & \qw & \ghost{\,} & \qw & \qw & \gate{R} & \meter & \rstick{b_2} \cw \\
& \lstick{\kets{0}} & \gate{R} & \gate{P} & \qw & \qw & \ghost{\,} & \qw & \qw & \ghost{\,} & \qw & \qw & \ghost{\,} & \qw & \qw & \gate{R} & \meter & \rstick{b_3} \cw \\
& \lstick{\kets{0}} & \gate{R} & \gate{P} & \qw & \qw & \ghost{\,} & \qw & \qw & \ghost{\,} & \qw & \qw & \ghost{\,} & \qw & \qw & \gate{R} & \meter & \rstick{b_4} \cw \\
& \lstick{\kets{0}} & \gate{R} & \gate{P} & \qw & \qw & \ghost{\,} & \qw & \qw & \ghost{\,} & \qw & \qw & \ghost{\,} & \qw & \qw & \gate{R} & \meter & \rstick{b_5} \cw \\
& \lstick{\kets{0}} & \gate{R} & \gate{P} & \qw & \qw & \ghost{\,} & \qw & \qw & \ghost{\,} & \qw & \qw & \ghost{\,} & \qw & \qw & \gate{R} & \meter & \rstick{b_6} \cw \\
& \lstick{\kets{0}} & \gate{R} & \gate{P} & \qw & \qw & \ghost{\,} & \qw & \qw & \ghost{\,} & \qw & \qw & \ghost{\,} & \qw & \qw & \gate{R} & \meter & \rstick{b_7} \cw \\
& \lstick{\kets{0}} & \gate{R} & \gate{P} & \qw & \qw & \ghost{\,} & \qw & \qw & \ghost{\,} & \qw & \qw & \ghost{\,} & \qw & \qw & \gate{R} & \meter & \rstick{b_m} \cw \\
& & \qw & \qw & \qw & \qw & \multigate{2}{U_l} & \qw & \qw & \multigate{2}{U_l} & \qw & \qw & \multigate{2}{U_l} & \qw & \qw &\qw & \qw & \qw\\
& & \qw & \qw & \qw & \qw & \ghost{U_l} & \qw & \qw & \ghost{U_l} & \qw & \qw & \ghost{U_l} & \qw & \qw &\qw & \qw & \qw\\
& & \qw & \qw & \qw & \qw & \ghost{U_l} & \qw & \qw & \ghost{U_l} & \qw & \qw & \ghost{U_l} & \qw & \qw &\qw & \qw & \qw}}
\caption{\label{fig:unco}The Lie-Trotter formula implemented with control qubits for a single segment.
The self-inverse Hamiltonians $U_l$ are controlled by the positions of the 1's in the control qubits.}
\end{figure}

We note that the approach in this paper differs from that in~\cite{CleveG+2009,Berry12}, where the evolution under a driving
Hamiltonian is applied for a time controlled by the positions of the 1's in the control qubits. In particular,
the current approach allows us to simulate a sum of many self-inverse Hamiltonians $U_l$,
where each $U_l$ can be implemented using the oracle for $H$ via standard techniques.

We now use these methods to prove Theorem \ref{thm:main}.
\begin{proof}[Proof of Theorem \ref{thm:main}.]
In step 1, for any $t' \in [0,t]$, we decompose the Hamiltonian $H(t')$ into $O(d^2)$ 1-sparse Hamiltonians $H_j(t')$
using techniques such as in~\cite{Berry2007}.
It is trivial to further divide these 1-sparse Hamiltonians into real and imaginary parts.
We use the lowest-order Lie-Trotter formula for the time-ordered exponential
for a short time period $[t_0,t_0+\delta t]$
\begin{equation}
{\cal T} \exp\left[\int_{t_0}^{t_0+\delta t} iH(t') dt' \right] \approx \prod_{j=1}^M \exp\left[ iH_j(t_0) \delta t \right],
\end{equation}
where ${\cal T}$ indicates time-ordering, and $M=O(d^2)$ is the number of Hamiltonians in the decomposition.
The error in this approximation is $O(M^2 \delta t^2(\|H\|^2+\|H'\|))$.
To obtain error $O(\varepsilon)$, we divide the long time interval $[0,t]$ into $r$ smaller intervals of length $\delta t=t/r$ with
\begin{equation}
r=\Theta \left( (Mt)^2(\|H\|^2+\|H'\|) /\varepsilon \right).
\end{equation}
The number of exponentials of the form $\exp[-iH_j(t')\delta t]$ in the approximation is $rM$.

In step 3, we use Lemma \ref{lem:dec} to approximately decompose the Hamiltonians $H_j$ into a sum of $O(\|H\|_{\rm max}/\herr)$ commuting and self-inverse terms $U_l$.
The exponential of the sum is written as a product of exponentials of $U_l$ without introducing additional error.
The result is an approximation to the evolution operator by a product of $\Theta(rM\|H\|_{\rm max}/\herr)$ exponentials
of $U_l$.

In step 4, we divide the interval $[0,t]$ into segments of length $\Theta(1/(M\|H\|_{\rm max}))$.
The number of exponentials of $U_l$ in each such segment is therefore
\begin{equation}
\label{eq:m}
m = \Theta(1/s) = \Theta \left( M^2(\|H\|^2+\|H'\|) t/(\herr\varepsilon) \right) .
\end{equation}
Each exponential of $U_l$ can be implemented using an ancillary control qubit, as described above,
with failure probability $O(s)$.
Since the number of ancillary control qubits in each segment is $m = \Theta(1/s)$,
the overall probability of failure within each segment is $O(1)$.
It is simple to obtain the probability of failure less than $1/4$ with an appropriate choice of multiplicative constants.

As explained above, rather than controlling each $U_l$ on a single ancilla qubit, we use the simulation scheme of Fig.\ \ref{fig:unco} to implement the same operation. This allows us to compress the state of the ancillas~\cite{Berry12}.
There are $\Theta(tM\|H\|_{\rm max})$ segments, and to bound the total error by $O(\varepsilon)$,
the error from the compression of each segment must be bounded by $\varepsilon_1 = \Theta(\varepsilon/(tM\|H\|_{\rm max}))$.
This implies that the maximum Hamming weight per segment, or the number of controlled operations, can be bounded by $k_1$, with
\begin{equation}
k_1 = O\left( \frac{\log(1/\varepsilon_1)}{\log\log(1/\varepsilon_1)} \right) = O\left( \frac{\log(td\|H\|_{\rm max}/\varepsilon)}{\log\log(td\|H\|_{\rm max}/\varepsilon)} \right).
\end{equation}
The number of calls to the oracle for $H$ is then $O(td^2\|H\|_{\rm max} k_1)$.
Since $\|H\|_{\rm max} \le \|H\|$, the number of oracle calls is upper bounded by Eq.~\eqref{eq:addgat}
(with the double-log factors omitted for simplicity).

For the number of additional gates, there are some modifications from~\cite{Berry12} needed.
In the result given for the number of additional gates in Theorem 1 of~\cite{Berry12},
the first term ``$TG\log(T)$'' is for complexity due to a driving Hamiltonian, where $T$ refers
to the total evolution time and $G$ is the number of elementary gates to approximate the driving Hamiltonian evolution.
In the present case, there is no driving Hamiltonian, and that term can be omitted.
The other change is that here the time intervals are divided into segments of length $\Theta(1/(M\|H\|_{\rm max}))$, rather than of length $1/4$.
Therefore, in each segment the error parameter $\varepsilon_0$ ($\varepsilon$ in~\cite{Berry12}) is now required to be $\Theta(\varepsilon/(mtM \|H\|_{\rm max})$.
We take
$k_0 = O\left( \log(1/\varepsilon_0) \right)$,
and the number of additional gates is given by
\begin{equation}
O(tM\|H\|_{\rm max} k_0 [(\log m)^2+\log m \log\log(1/\varepsilon_0)]).
\end{equation}
We consider scaling for large $\|H\|_{\rm max}/\herr$, so
$1/\varepsilon_0 = O(m^2)$, and this becomes $O(td^2\|H\|_{\rm max} (\log m)^3)$.

Next, we place a bound on the value of $\herr$ needed.
If the discretized Hamiltonian is $\tilde H =\gamma \sum_l U_l$, we have an error bounded by
$\|H-\tilde H\|\le \sqrt{d} \|H-\tilde H\|_{\rm max}\le \sqrt{d}\herr$.
To ensure an error in the simulation bounded by $O(\varepsilon)$, we can therefore take $\herr=\Theta(\varepsilon/t\sqrt{d})$.
Substituting this expression, and the expression for $m$ in Eq.~\eqref{eq:m}, then gives the number of gates upper bounded by Eq.~\eqref{eq:addgat}.

So far we considered the average complexity, rather than an upper bound on the complexity.
In order to place an upper bound on the complexity, Refs.~\cite{CleveG+2009,Berry12} used the Markov bound and obtained a multiplying factor of $1/\varepsilon$.
That factor would lead to scaling that is polynomial in $1/\varepsilon$, but it is unnecessary.
Using the Chernoff bound, it can be shown that we can account for an upper bound on the complexity
by adding $\log(1/\varepsilon)$ to the number of segment attempts (see Appendix).
Therefore, the result changes so that $\log(1/\varepsilon)$ is added to $d^2\tau$ in Eq.~\eqref{eq:addgat}.
\end{proof}

Efficient methods for Hamiltonian simulation are important for studying physical quantum systems, as well
as a basis for other quantum algorithms. A drawback of previous simulation methods is that they have scaling
that is polynomial in the inverse error. In this paper, we provided a method that is polynomial in the logarithm
of the inverse error and the rate of change of the Hamiltonian, which is relevant for those cases in which the
Hamiltonians vary rapidly.

\acknowledgements
We acknowledge valuable discussions with A. M. Childs, S. Gharibian, and N. Wiebe.
Research supported by ARC grant FT100100761, Canada's NSERC, CIFAR, MITACS, and the U.S. ARO.
RDS acknowledges support from the Laboratory Directed Research and Development Program
at Los Alamos National Laboratory.

\appendix
\section{Appendix: Improved analysis of procedure to correct faults}

Here we show that the number of queries can be capped by 
\begin{align}\label{eq:query-bound}
O\left(\left[d^2\tau\frac{\log(d\tau/\varepsilon)}{\log\log(d\tau/\varepsilon)}+\log(1/\varepsilon)\log(d\tau/\varepsilon)\right]\log^*n\right),
\end{align}
and the number of additional gates can be capped by
\begin{align}\label{eq:gate-bound}
O\left([d^2 \tau+\log(1/\varepsilon)] \log^3[d(\tau+\tau')/\varepsilon]n^c\right),
\end{align}
with probability $O(\varepsilon)$ of exceeding these caps.
We can assume that the algorithm does not exceed the cap, with cases where more queries or gates would be required regarded as an error.
As that only occurs with probability $O(\varepsilon)$, its contribution to the total error is $O(\varepsilon)$.
For Theorem 1 in the main text we simply present Eq.~\eqref{eq:gate-bound} for the sum of the two bounds above.

In Eq.~\eqref{eq:query-bound}, $\log^*$ is the iterated logarithm.
That is, it is the number of times the logarithm must be performed to reach a constant.
This factor comes from using the method for decomposing a sparse Hamiltonian into 1-sparse components from Ref.~\cite{Berry2007}, which scales as $\log^* n$.
There are no other factors dependent on $n$ that affect the number of queries.
For the number of additional gates, the exact dependence on $n$ will depend on how the controlled operations are performed.
It will be polynomial in $n$, which is why the scaling is given as $n^c$.

In~\cite{CleveG+2009,Berry12}, similar expressions were obtained, but containing factors of $1/\varepsilon$.
The context of those works (quantum query complexity) permitted $\varepsilon$ to be essentially treated as a constant, which is not so in the present context, where the scaling with respect to $\varepsilon$ is relevant.
In the analysis of~\cite{CleveG+2009,Berry12}, the $1/\varepsilon$ factors arise as a result of applying the Markov bound to the expected complexity in three places:
\begin{enumerate}
\item
the number of attempted segment computations,
\item
the additional cost of correcting errors, and
\item
the number of gates for the recursive measurement.
\end{enumerate}
Items 1 and 2 are discussed in Sec.~4.1 of Ref.~\cite{Berry12}, and item 3 is discussed in Sec.~6 of Ref.~\cite{Berry12}.
Below, in Secs.~A1 to A3, we use different techniques to obtain tighter bounds with respect to $\varepsilon$ in these respective places.
This analysis is performed assuming that the measurement results are obtained with the probabilities that arise in an uncompressed scheme.
This is valid because the error due to compression is already accounted for.
The compression per segment is set so that the error due to the compression is $O(\varepsilon/N)$, where $N$ (defined in Sec.~A1) upper bounds 
the number of attempted segment computations performed (including corrections of faults).
Note that this allowable error due to compression is different than in the main text.
The allowable error can be set to this value without changing the complexity (see Sec.~A2).

\subsection{A1. The number of attempted segment computations}
\label{sec:segs}
The number of segments to compute is $T := O(d^2 \tau)$.
Each attempt to compute a segment succeeds with probability at least 
$\frac{3}{4}$.
When a segment attempt fails (in the sense of having faults), then this can be regarded as a step backwards in a random walk.
An attempt is made to undo the failed attempt---including the faults that occurred within it.
This attempt to undo also succeeds with probability at least $\frac{3}{4}$.
If the undo succeeds then the random walk has returned to the initial position, and the segment can be attempted again.
If the undo fails then the random walk takes another step backwards, and an attempt must be made to undo the previous undo along with its faults.

We use the terminology that a ``segment attempt'' is the initial attempt to perform the segment, as well as any attempt to undo failed attempts.
A ``segment'' is a complete successful segment, which may be composed of many segment attempts.
The process of achieving a segment corresponds to a biased random walk composed of segment attempts,
with probability of forward steps at least $\frac{3}{4}$ (and otherwise backward steps).
When this walk advances one step, then a segment is successfully achieved.
The query costs associated with the the segment attempts are not all the same (which we will account for in Sec.~A2).
Concatenating the random walks for all $T$ segments, there is an overall random walk that advances $T$ steps,
and the parts of the walk corresponding to different segments do not need to be distinguished in the analysis of the walk.

To complete $T$ segments, it suffices for the number of successful segment attempts minus the number of failed segment attempts to be at least $T$.
After $2T$ segment attempts, the \textit{expected} position of the random walk is at least $T$; however, the probability of failing to reach position $T$ might not be small.
We will first show that the number of segment attempts can be bounded by 
\begin{align}\label{eq:cap-number-segments}
N = O(T + \log(1/\varepsilon))
\end{align}
with probability of failing to reach position $T$ at most 
$\varepsilon$.

To show this, assume that we cap the number of segment attempts at 
$N = 2T+\lambda$, for some parameter $\lambda$ (that will be determined later).
Let $X$ be the random variable corresponding to the number of successful segment attempts (among all of the $N$ trials).
Note that the expected value of $X$ (which we denote by $\mu$) is 
$\mu \ge \frac{3}{4}N = \frac{3}{2}T + \frac{3}{4}\lambda$.
Failure to reach position $T$ implies  
\begin{align}
X 
& <  \textstyle{\frac{1}{2}}(N+T) \nn
& =  \textstyle{\frac{3}{2}}T + \textstyle{\frac{1}{2}}\lambda \nn
& =  \left(1 - \textstyle{\frac{\lambda}{6T+3\lambda}}\right)\left(\textstyle{\frac{3}{2}}T + \textstyle{\frac{3}{4}}\lambda\right)
\end{align}
(where the last equality is an algebraic identity).
Therefore, setting $\delta = {\lambda}/(6T+3\lambda)$ in the Chernoff bound \cite{Chernoff},
\begin{align}\label{eq:chernoff}
{\rm Pr}[X < (1-\delta)\mu] < \exp(-\delta^2\mu/2),
\end{align}
implies that the failure probability is bounded above by 
\begin{align}
\exp\left[-\textstyle{\frac 12\left(\frac{\lambda}{6T+3\lambda}\right)^2}
\textstyle{\left({\frac{3}{2}T + \frac{3}{4}\lambda}\right)}\right] 
= \exp\left(-\textstyle{\frac{1}{24}\frac{\lambda^2}{2T+\lambda}}\right).
\label{eq:random-walk-analysis}\end{align}
This is at most $\varepsilon$ by setting  
$\lambda = O(T + \log(1/\varepsilon))$.
It follows that the required number of segment attempts is bounded by the expression in Eq.~(\ref{eq:cap-number-segments}) up to error probability at most $\varepsilon$.

\subsection{A2. The additional cost of correcting errors}
\label{sec:corr}
Next, we account for the query costs associated with each segment attempt.
There are two sources associated with the query cost of each segment attempt.
One is the basic query cost associated with each segment attempt including the initial attempt.
This cost is $O(Nk_1)$; using $k_1$ from Eq.~(10) of the main text, this cost is $O(N \log(\segs/\varepsilon)/\log\log(\segs/\varepsilon))$.
Here we also need to take account of the fact that we are using $N$ segment attempts, so we only allow error of $O(\varepsilon/N)$ per segment.
Using this to determine $k_1$, we still obtain $k_1=O(\log(\segs/\varepsilon)/\log\log(\segs/\varepsilon))$.
This means that modifying the allowed error per segment in this way does not alter the complexity.

Another, additional query cost, arises when faults occur in segment attempts.
If the first attempt at a segment fails, resulting in $f_1$ faults, then the subsequent attempt to undo that failure makes $2f_1$ additional queries at locations associated with the positions of these faults.
This is because a fault corresponds to applying the operation $e^{i U_l \pi/2}$, and an exact implementation of this operation can be performed using $U_l$ twice and other gates \cite{CleveG+2009}.
If this segment attempt also fails, resulting in $f_2$ new faults, then the next attempt makes $2(f_1 + f_2)$ additional queries.
If that segment attempt is successful, then the next segment attempt (for the first failed segment) needs $2f_1$ additional queries.
In this manner, every fault that occurs contributes to the query cost of \textit{every} segment attempt that is needed to undo the failed segment where that fault occurred.

There are $\segs$ segments that need to be successfully computed. 
For each such segment, we let $R$ and $E$ be the random variables
associated with the number of attempts and the total 
number of faults, respectively.
For a single segment \emph{attempt}, $f$ is the random variable for the number of faults.
The probability distribution for $f$ is denoted $q(f)$, and is the binomial
\begin{equation}
q(f) = \binom{m}{f} \left( \frac 1{4m} \right)^f \left( 1- \frac 1{4m} \right)^{m-f}.
\end{equation}
This is because there are $m$ ancillas and each has a failure probability of $1/(4m)$.
Note that $q(0) \ge 3/4$ is the probability of success of any attempt.
In general we have
\begin{equation}
\label{eq:prbnd}
q(f) \le e^{-1/4} \frac{1}{f!} \left( \frac 1{4-1/m} \right)^{f}.
\end{equation}

When an attempt fails, we then require two other attempts, first to undo the failed attempt, then to redo it.
Both also have probability $q(0)$ to succeed.
If there are $R$ attempts used to successfully compute a segment, then there are $n$ failed attempts and $n+1$ successful attempts, with $R=2n+1$.
Omitting the last attempt (which must be successful), there are $2n$ attempts, and no more than $2^{2n}=4^n$ possible sequences of successes and failures.
Denoting the exact number of sequences by $C_n$, the probability distribution for $n\ge 0$ is then
$\Pr(n)=C_{n} q(0)^{n+1} [1-q(0)]^{n}$.

For a given $R$, the total number of faults is $E=f_1 + \ldots  + f_n$, with $f_j \ge 1$.
The total number of additional queries, $Q$, is then upper bounded by $2(2n-1)E$.
This is because each fault need be corrected no more than $2n-1$ times, and each correction uses $2$ queries.
The probability distribution for the numbers of faults $f_1, \ldots, f_{n} \ge 1$ is $C_{n} q(0)^{n+1}q(f_1) \ldots q(f_{n})$.
Then,
\begin{align}
\mE [e^{tQ}]   
& \le q(0) + \sum_{n \ge 1} C_{n} q(0)^{n+1} \nn
& \quad \times \sum_{f_1,\ldots,f_{n} =1}^m q(f_1) \ldots q(f_{n}) e^{2t(2n-1)(f_1 + \ldots + f_{n})} \nn
& = \sum_{n \ge 0} C_{n} q(0)^{n+1} \left[ \sum_{f = 1}^m q(f ) e^{2t (2n-1) f } \right]^{n} .
\end{align}

The right-hand side diverges if $n$ is allowed to be unbounded.
However, we can bound $n$ by $n_M$ of order $\log (T/\varepsilon)$ using the Chernoff bound,
incurring an error probability of order $\varepsilon/T$ (this follows from substituting $1$ for $T$ and $\varepsilon/T$ for $\varepsilon$ in Eq.~\eqref{eq:cap-number-segments}).
Taking this error probability to be exactly $\varepsilon/T$ for simplicity,
we define $x:=1/(1-\varepsilon/T)$.
Given that the whole simulation succeeded (i.e., $n \le n_M$)
we redefine the probabilities as, for $n \le n_M$,
$\bar \Pr(n) = x \Pr(n)$, and for $n > n_M$, $\bar \Pr(n) = 0$. This implies that, given that the simulation succeeded,
\begin{align}
\mE [e^{tQ}]/x   
& = \sum_{n= 0}^{n_M} C_{n} q(0)^{n+1} \left[ \sum_{f =1}^m q(f ) e^{2t (2n-1) f } \right]^{n} =: g(t) .
\end{align}
The function $g(t)$ is analytic and monotonically increases with $t$.
Because $\mE [1]=1$, we have $g(0)=1/x$.

The derivative of the function $g(t)$ satisfies
\begin{align}
\dot g (t) &=  \sum_{n = 1}^{n_M} C_{n} q(0)^{n+1} n \left[ \sum_{f =1}^m q(f ) e^{2t (2n-1) f } \right]^{n-1} \nn &\quad \times
 \sum_{f' = 1}^m q(f' )  2(2n-1) f' e^{2t (2n-1) f '}.
\end{align}
Suppose we are in a regime where $t \le  1/(4 n_M)$.
Using Eq.~\eqref{eq:prbnd}, the sum over $f '$ is upper bounded by a constant as
\begin{align}
& \sum_{f' = 1}^m q(f' )  f' e^{2t (2n-1)f '} \le \sum_{f' = 1}^m q(f' )  f' e^{f '}\nn
& \le \sum_{f' \ge 1} e^{-1/4} [1/(4-1/m)]^{f'}  f' e^{f '}/f'! \nn
& = \sum_{f' \ge 1} e^{-1/4} [e/(4-1/m)]^{f'}  f'  /f'! \nn
& = e^{[e/(4-1/m)]-1/4} [e/(4-1/m)] < 2 .
\end{align}

Using this gives the bound on the derivative of $g(t)$ as
\begin{align}
\dot g (t) \le 8 \sum_{n = 1}^{n_M} C_{n} q(0)^{n+1} n^2   \left[ \sum_{f=1}^m q(f ) e^{ 2t(2n-1)f } \right]^{n-1}\; .
\end{align}
Next, if $t \le 1/(40 n_M)$ and $m\ge 35$, then
\begin{align}
\sum_{f =1}^m q(f ) e^{f/10} & \le \sum_{f \ge 1} e^{-1/4} [e^{1/10}/(4-1/m)]^{f} /f! \nn
&= e^{-1/4}(e^{e^{1/10}/(4-1/m)}-1) \nn
&=\beta < 1/4 .
\end{align}
Note that we can take $m$ to be at least a constant without changing any scaling.
Then, using $C_n \le 4^n$, and noting that $4 \beta<1$, we have
\begin{align}
\dot g (t) &\le 8 q^2(0) \sum_{n=1}^{\infty} n^2 q(0)^{n-1} (4\beta)^{n-1} \nn
&\le 8 q^2(0) \sum_{n=1}^{\infty} n (n+1) q(0)^{n-1} (4\beta)^{n-1}  \nn
& =  16 q^2(0) \frac 1 {[1- 4 q(0) \beta]^3}.
\end{align}
That is, $\dot g (t) \le \alpha$ for some constant $\alpha$.
Therefore, we obtain
\begin{align}
g(t)&=1/x+\int_0^t \dot g(t') dt' \nn
&\le 1/x + \alpha t
\end{align}
for $t \le 1/(40 n_M)$.
 
We now let $Z$ be the random variable that sums all the $Q$ for the $\segs$ segments.
Recall that these random variables are taken to be postselected on each segment taking no more than $O(\log (\segs/\varepsilon))$ attempts to compute.
Using $\mE [e^{tQ}]=xg(t)$, we have
\begin{equation}
\mE [e^{tQ}] \le 1+x\alpha t \le e^{x\alpha t}.
\end{equation}
Using the Markov inequality, if $t>0$ and the simulation succeeded,
\begin{align}
\label{eq:chernoff}
\Pr[Z \ge \lambda T]
&\le  \frac{\mathbb{E} [e^{tZ}]} {e^{t \lambda \segs }}  \nn
& \le \left( \frac{\mathbb{E}[e^{tQ}]} {e^{t \lambda  }} \right)^\segs  \nn
& \le \left( \frac{e^{x \alpha  t}} {e^{t \lambda  }} \right)^\segs .
\end{align}
Then, for error probability of order $\varepsilon$, it suffices to satisfy
\begin{align}
t(\lambda-x\alpha)\ge \log(1/\varepsilon)/\segs .
\end{align}
That is, it suffices that
\begin{align}
\lambda \ge \left( \frac {\log(1/\varepsilon)} {t\segs} + x\alpha \right).
\end{align}
Therefore we can choose $\lambda$ such that
\begin{align}
\lambda \segs = O(\segs + \log(1/\varepsilon)/t)  .
\end{align}
Recall that above we have required $t \le 1/(40 n_M)$ for $n_M=O(\log(1/\varepsilon))$.
Taking $t$ as large as is permitted, we obtain
\begin{align}
\label{eq:corcor}
\lambda \segs = O(\segs + \log(1/\varepsilon)  \log(\segs/\varepsilon)) .
\end{align}

For the number of queries, we need to first take account of the fact that the number of segment attempts is larger than the number of segments.
Then we need to take account of the additional queries needed to correct the faults.
For the first correction, replacing $\segs$ with $N$ in the expression $Nk_1$
for the number of segment attempts yields the number of queries as,
\begin{equation}
O\left([\segs+\log(1/\varepsilon)] \frac{\log(\segs/\varepsilon)}{\log\log(\segs/\varepsilon)}\right).
\end{equation}
Taking the maximum of this expression and Eq.~\eqref{eq:corcor} yields Eq.~\eqref{eq:query-bound}.
The additional $\log^* n$ takes account of the number of queries needed for the decomposition into 1-sparse Hamiltonians in the method from Ref.~\cite{Berry2007}.

\subsection{A3. The number of gates for the recursive measurement}
\label{sec:fail}
Next we consider the number of gates used for the recursive measurement.
This depends on the number of faults detected.
If the number of faults detected is $f$, then the number of steps in the recursive procedure is
$1+2f\log m$.
For all $N$ segment attempts, we detect $O(N)$ faults, resulting in the number of steps in the recursive procedures for all attempts being
$O(N\log m)$.

Using Eq.~\eqref{eq:cap-number-segments}, we can then modify Eq.~(11) of the main text by replacing $tM\|H\|_{\rm max}$ with
$tM\|H\|_{\rm max}+\log(1/\varepsilon)$.
That then yields Eq.~\eqref{eq:gate-bound} as the bound on the number of gates.

\begin{thebibliography}{10}
\bibitem{Fey82}
R. P. Feynman,
International Journal of Theoretical Physics \textbf{21}, 467 (1982).
\bibitem{Lloyd96}
S. Lloyd, 
Science \textbf{273}, 1073 (1996).
\bibitem{Aharonov03}
D. Aharonov and A. Ta-Shma,
in \emph{Proceedings of the 35th ACM Symposium on Theory of Computing, 2003} (ACM, New York, 2003), pp. 20--29.
\bibitem{Childs09b}
A. M. Childs, R. Cleve, S. P. Jordan, and D. Yonge-Mallo,
Theory of Computing \textbf{5}, 119 (2009).
\bibitem{Harrow09}
A. W. Harrow, A. Hassidim, and S. Lloyd,
Phys. Rev. Lett. \textbf{103}, 150502 (2009).
\bibitem{Childs04} 
A. M. Childs,
Ph.D. thesis, Massachusetts Institute of Technology, 2004.
\bibitem{Berry2007}
D. W. Berry, G. Ahokas, R. Cleve, and B. C. Sanders,
Commun. Math. Phys. \textbf{270}, 359 (2007).
\bibitem{Childs11}
A. M. Childs and R. Kothari,
Theory of Quantum Computation, Communication, and Cryptography (TQC 2010), Lecture Notes in Computer Science \textbf{6519}, 94 (2011).
\bibitem{Wiebe11} N. Wiebe, D. W. Berry, P. H{\o}yer, and B. C. Sanders, 
J. Phys. A: Math. Theor. \textbf{44}, 445308 (2011).
\bibitem{Wiebe10} N. Wiebe, D. W. Berry, P. H{\o}yer, and B. C. Sanders,
J. Phys. A: Math. Theor. \textbf{43}, 065203 (2010). 
\bibitem{Childs08} 
A. M. Childs, 
Commun. Math. Phys. \textbf{294}, 581 (2009).
\bibitem{BerryC12}
D. W. Berry and A. M. Childs,
Quantum Information and Computation \textbf{12}, 29 (2012).
\bibitem{Brown06}
K. R. Brown, R. J. Clark, and I. L. Chuang, Phys. Rev. Lett. \textbf{97}, 050504 (2006).
\bibitem{Brown10}
K. L. Brown, W. J. Munro, and V. M. Kendon, Entropy \textbf{12}, 2268 (2010).
\bibitem{Poulin11}
D. Poulin, A. Qarry, R. D. Somma, and F. Verstraete,
Phys. Rev. Lett. \textbf{106}, 170501 (2011).
\bibitem{CleveG+2009}
R. Cleve, D. Gottesman, M. Mosca, R. Somma, and D. Yonge-Mallo,
In {\em Proc.\ 41st ACM Symposium on Theory of Computing}, pp.~409--416 (2009).
\bibitem{Berry12} D. W. Berry, R. Cleve, and S. Gharibian, 
Quantum Information and Computation (accepted May 2013);
arXiv:1211.4637.
\bibitem{Chernoff} H. Chernoff,
Annals of Mathematical Statistics \textbf{23}, 493 (1952).
\end{thebibliography}
\end{document}